\newtheorem{lemma}{Lemma}
\newtheorem{theorem}[lemma]{Theorem}
\newtheorem{defn}[lemma]{Definition}
\newcommand{\opt}{\Delta^*}
\newcommand{\tree}{\textsf{T}}
\newcommand{\parent}{\textsf{Par}}
\newlength{\boxlength}\setlength{\boxlength}{\textwidth}
\title{Purely Combinatorial Algorithms for Approximate Directed Minimum Degree Spanning Trees}
\author{Ran Duan \thanks{Institute for Interdisciplinary Information Sciences, Tsinghua University,  \href{}{duanran@mail.tsinghua.edu.cn}} 
	\and Tianyi Zhang \thanks{Institute for Interdisciplinary Information Sciences, Tsinghua University, \href{}{tianyi-z16@mails.tsinghua.edu.cn}}}
\date{}
\begin{document}

\maketitle

\thispagestyle{empty}
\begin{abstract}
Given a directed graph $G$ on $n$ vertices with a special vertex $s$, the directed minimum degree spanning tree problem requires computing a incoming spanning tree rooted at $s$ whose maximum tree in-degree is the smallest among all such trees. The problem is known to be NP-hard, since it generalizes the Hamiltonian path problem. The best LP-based polynomial time algorithm can achieve an approximation of $\opt+2$ [Bansal et al, 2009], where $\opt$ denotes the optimal maximum tree in-degree. As for purely combinatorial algorithms (algorithms that do not use LP), the best approximation is $O(\opt+\log n)$ [Krishnan and Raghavachari, 2001] but the running time is quasi-polynomial. In this paper, we focus on purely combinatorial algorithms and try to bridge the gap between LP-based approaches and purely combinatorial approaches. As a result, we propose a purely combinatorial polynomial time algorithm that also achieves an $O(\opt + \log n)$ approximation. Then we improve this algorithm to obtain a $(1+\epsilon)\opt + O(\frac{\log n}{\log\log n})$ for any constant $0<\epsilon<1$ approximation in polynomial time.
\end{abstract}


\section{Introduction}
Let $G = (V, E)$ be a directed graph, and $s\in V$ a special vertex designated as a sink. Conventionally, define $n = |V|$, $m = |E|$. A directed spanning tree of $G$ is a spanning tree rooted at $s$ such that every edge is directed from child to parent. The degree of a directed spanning tree is defined to be the maximum tree in-degree among all vertices. In the \emph{directed minimum degree spanning tree} problem (DMDST), we wish to compute a directed spanning tree of smallest degree. Denote the optimal degree by $\opt$. Computing the exact minimum is apparently NP-hard, because the Hamiltonian path problem can be reduced to DMDST. Therefore one should turn to look at the approximate version of DMDST.

There has been a line of research that focuses on this problem. In \cite{furer1990nc}, the authors proposed a polynomial time algorithm that finds a directed spanning tree whose degree is at most $O(\opt\log n)$. The approximation guarantee was improved to $b\opt + \log_b n, \forall b>1$ by \cite{dmdst}, but the time complexity was blown up to a quasi-polynomial time of $n^{O(\log_b n)}$. As shown in \cite{yao2008polynomial}, the approximate DMDST becomes much easier when $G$ is acyclic, where a directed spanning tree of degree $\leq \opt + 1$ can be computed in polynomial time. The currently best approximation was achieved by \cite{bansal2009additive} where an additive $2$-approximation, which is almost optimal, was guaranteed by an LP-based algorithm in polynomial time. 

\subsection{Our results}
In this paper we focus on approximation algorithms that are \emph{purely combinatorial} in the sense that we look for efficient algorithms that do not use linear programming. The reason why we study purely combinatorial algorithms for minimum-degree directed spanning trees is two-fold. On the one hand, purely combinatorial algorithms are interesting on their own right, because compared with LP-based algorithms, purely combinatorial algorithms are usually conceptually simpler and thus might be easier to implement in practice. On the other hand, the currently best purely combinatorial algorithm for the minimum-degree directed spanning tree problem runs in quasi-polynomial time and it only has an approximation guarantee of $O(\opt+\log n)$, and so there is still a large gap between LP-based approaches and purely combinatorial approaches in terms of both running time and approximation.

As a result, we show that the $O(\opt + \log n)$ approximation achieved by \cite{dmdst} can actually be achieved in polynomial time instead of quasi-polynomial time; furthermore, we extend the idea and improve the approximation to  $(1+O(\epsilon))\opt + O(\frac{\log n}{\log\log n})$.
\begin{theorem}\label{log}
There is an deterministic polynomial algorithm that computes a directed spanning tree of degree $O(\opt + \log n)$.
\end{theorem}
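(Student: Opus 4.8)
The plan is to follow the classical local-search paradigm for minimum-degree spanning trees (à la Fürer–Raghavachari), but to argue that a polynomial number of improvement steps suffices to reach an $O(\opt+\log n)$-degree tree. I would start with an arbitrary directed spanning tree $\tree$ rooted at $s$ and maintain it under \emph{degree-reducing swaps}: pick a vertex $v$ of current maximum in-degree $d$, and try to find a non-tree edge $(x,y)\in E$ with $x$ in the subtree hanging below some child of $v$ (so that rerouting does not create a cycle) and $y$ of in-degree noticeably smaller than $d$; swapping in $(x,y)$ and deleting the tree edge from $x$'s component toward $v$ decreases $v$'s in-degree by one at the cost of increasing $y$'s by one. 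The subtlety, exactly as in \cite{furer1990nc,dmdst}, is that a single high-degree vertex may have no such improving swap, so one must look at the whole set of vertices whose in-degree is within an additive window of the maximum and argue a \emph{collective} improvement.

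\medskip

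The key steps, in order, are: (1) Fix a threshold $d$ and let $S_d$ be the set of vertices with in-degree $\ge d$; consider the "blocking" set $B$ of vertices with in-degree $\ge d-1$ together with the tree components obtained by removing $B$. (2) Show a dichotomy: either there is an improving swap that lowers the in-degree of some vertex in $S_d$ without raising any in-degree above $d-1$ (make progress and recurse), or the non-tree edges entering the relevant components are so constrained that $B$ certifies a lower bound $\opt = \Omega(d - \log n)$, which is the witness structure used in \cite{dmdst}: roughly, the components must be repeatedly "absorbed" and the number of components halves at each in-degree level, giving a $\log n$-depth recursion whose bottom level forces every feasible tree to have some vertex of in-degree $\Omega(d-\log n)$. (3) Bound the number of improvement steps: each successful swap strictly decreases a potential such as $\sum_v c^{\deg_{\tree}(v)}$ for a suitable constant $c>1$, or more simply decreases the lexicographically-sorted degree sequence, and a standard argument (as in the undirected case) shows only polynomially many swaps can occur before the dichotomy's second branch triggers. (4) Combining, when the algorithm halts with maximum degree $d$ we have $d = O(\opt + \log n)$, and the total running time is polynomial because each step costs $O(mn)$ to search for a swap and there are $\mathrm{poly}(n)$ steps.

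\medskip

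The main obstacle — and the crux of why the earlier combinatorial algorithm of \cite{dmdst} was only quasi-polynomial — is step (3): naively, the witness-based argument wants to branch into all ways of picking which component each non-tree edge reroutes into, giving an $n^{\Theta(\log n)}$ search tree. To get polynomial time I would instead make the local search oblivious to that branching: rather than searching for a globally consistent rerouting, I would define improvement via a single augmenting path in an auxiliary "swap graph" whose nodes are tree components at the current top in-degree levels and whose arcs are non-tree edges that move load downward, and show that the \emph{absence} of such an augmenting path already yields the $\Omega(d-\log n)$ lower bound by a cut argument. This replaces the exponential enumeration by a polynomial-time reachability computation, at the (acceptable) cost of the extra additive $\log n$. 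Verifying that this cut argument is tight — i.e. that "no augmenting path" really forces $\opt \ge d - O(\log n)$ and not merely $d - O(\log^2 n)$ — is the delicate part I would spend the most care on.
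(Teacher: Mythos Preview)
Your proposal misdiagnoses the reason the algorithm of \cite{dmdst} is quasi-polynomial, and this leads you to propose the wrong fix. The quasi-polynomial time in \cite{dmdst} does \emph{not} come from any $n^{\Theta(\log n)}$ branching in a witness search; their witness/blocking argument (Lemma~\ref{witness} here) is a straightforward counting argument with no enumeration. The bottleneck is purely in your step~(3): a single $k$-improvement path can pass through many vertices of degree $\leq k-2$, each of which gains a child. With the potential $\phi=\sum_v c^{\deg(v)}$, the increase from those many low-degree vertices can nearly cancel the decrease from the one degree-$k$ vertex losing a child. To guarantee a net decrease at all, \cite{dmdst} had to take a very steep potential (base $n$), which makes the total potential $n^{O(\Delta)}$ and the multiplicative drop per step only $1-n^{-O(1)}$ relative to a range of $n^{O(\log n)}$ levels --- hence $n^{O(\log n)}$ iterations. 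Your statement that ``a standard argument shows only polynomially many swaps can occur'' is exactly the gap: no standard argument gives this, and your augmenting-path-in-a-swap-graph idea does not address it, because the issue is not finding an improvement but bounding the collateral potential increase of the improvement you find.

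The paper's actual fix is different and more elementary than what you sketch. It keeps the base-$2$ potential $\phi=\sum_v 2^{\deg(v)}$, picks $k=\arg\max_d 2^d|N_d|$ (which forces $k\geq \Delta-\log n$), and then --- this is the new idea --- \emph{refuses} to apply an improvement path starting at a child $u$ of $N_k$ unless the subtree potential $\psi_u=\sum_{v\in\tree_u\setminus S_{k-1}}2^{\deg(v)}$ is at most $2^{k-3}$. That cap guarantees each successful improvement drops $\phi$ by at least $2^{k-3}\geq \Omega(\phi/n^2)$, so only $O(n^3)$ iterations occur. The correctness side then has one extra wrinkle: when no such low-$\psi_u$ improvable child exists, one must show that at most half of the unrelated children of $N_k$ can be $k$-improvable (a pigeonhole on the $\psi_u$'s against the total potential below level $k-1$), so the remaining half still feed the blocking argument of Lemma~\ref{witness} and certify $\opt\geq (k-1)/8$. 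Your proposal contains neither the subtree-potential filter nor this pigeonhole step, and without them the polynomial bound on iterations does not go through.
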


\begin{theorem}\label{loglog}
	For any fixed constant $\epsilon \in (0, \frac{1}{4})$, there is a deterministic polynomial time algorithm that computes a directed spanning tree of degree at most $(1+O(\epsilon))\opt + O(\frac{\log n}{\log\log n})$. Here $\log(\cdot)$ is calculated with base $2$.
\end{theorem}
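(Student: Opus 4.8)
The plan is to strengthen the local-search argument underlying Theorem~\ref{log} by replacing its dyadic blocking structure with one of much larger branching factor, and by using an $\epsilon$-slack to control the leakage that is peculiar to the directed setting. Recall the shape of the argument for Theorem~\ref{log}: run a local search that repeatedly tries to reduce the number of vertices at the current maximum in-degree $d$ by a chain of subtree-rerouting swaps; when it gets stuck, build a nested family $B_0\subseteq B_1\subseteq\cdots\subseteq B_\ell$ of vertex sets, where $B_0$ is the set of degree-$d$ vertices, $B_{i+1}$ consists of the vertices of in-degree $\ge d-i-1$ reachable from $B_i$ along one more ``improvement edge'' (a child $u$ of some $v\in B_i$ has a graph edge to a vertex $w\notin\mathrm{subtree}_T(u)$), and passing each layer roughly multiplies the number of subtrees of $T$ dangling off $B_0\cup\dots\cup B_i$; since there are only $n$ vertices and each layer costs one unit of in-degree, one gets $d = O(\opt + \log n)$, where the constant multiplicative factor absorbs the fact that, unlike in undirected graphs, the family $B_\ell$ does \emph{not} separate the dangling subtrees — they can still be joined through vertices of low in-degree, or through high-in-degree vertices the search never enters.

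The first new ingredient is to choose the branching factor so as to balance the two competing constraints. A blocking vertex at layer $i$ has in-degree $\ge d-\ell$, hence carries $\Omega(d-\ell)$ dangling subtrees, so we may let each layer multiply the subtree count by up to $b \approx d-\ell$ instead of by $2$; then $\ell = \lceil \log_b n\rceil$ suffices to exhaust all $n$ vertices. When $d$ exceeds the threshold $\Theta\!\left(\log n/(\epsilon\log\log n)\right)$ we have $b \ge d/2$ and $\ell \le \log n/\log(d/2) = O(\log n/\log\log n) \le \epsilon d$, so the blocking family certifies $\opt \ge (1 - O(\epsilon))(d - \ell) \ge (1-O(\epsilon))d$, contradicting $d > (1+\epsilon)\opt$ once $\epsilon$ is a small enough constant. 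Hence the local search can always improve until $d \le \max\{(1+O(\epsilon))\opt,\ O(\log n/\log\log n)\}$, which is the claimed bound; if $d$ never exceeds the threshold there is nothing to prove.

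The second, and harder, ingredient is making the blocking family genuinely certify $\opt \ge (1-O(\epsilon))(d-\ell)$ in a directed graph. The fix is to relax the local search: improving chains may have length up to $\ell$ and are allowed to route through any vertex of in-degree greater than $(1-\epsilon)d$, and ``stuck'' means no such chain reduces $|S_d|$, where $S_j$ denotes the set of vertices of in-degree $\ge j$. The blocking family then collects all vertices of in-degree $>(1-\epsilon)d$ reachable from $S_d$, and one shows that in \emph{any} arborescence, either some vertex of the family receives in-degree $\ge (1-O(\epsilon))(d-\ell)$, or so many of its edges are forced onto the at-most-$(1-\epsilon)d$-degree ``bridge'' vertices separating the dangling subtrees that, by an averaging argument over those bridges, in-degree $\ge(1-O(\epsilon))(d-\ell)$ is again forced somewhere; rearranging gives $d \le (1+O(\epsilon))\opt + O(\log n/\log\log n)$. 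I expect this bridge accounting — defining the relaxed improvement graph so that it is still reachable in polynomially many improvements, has the closure properties needed for the lower bound, and leaks only an $O(\epsilon)$ fraction — to be the main obstacle. The rest is routine: each improving chain is found by a reachability computation in an auxiliary graph, at cost $\mathrm{poly}(n)$ per step; each success strictly decreases the lexicographic population vector $(|S_d|,|S_{d-1}|,\dots)$, so there are $\mathrm{poly}(n)$ successes in total; and we may initialise from the tree of Theorem~\ref{log}. A minor point inherited from that theorem is scheduling the swaps within a chain so that each one is individually acyclic as the subtree relations shift, which is handled by applying them in an order consistent with $T$.
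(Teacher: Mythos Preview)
Your high-level plan---extend single-step improvements to chains, use a branching of order roughly $d$ so that the number of layers is $O(\log n/\log\log n)$, and certify a lower bound on $\opt$ when stuck---is the right shape, and it is indeed what the paper does. But two of the three things you call ``routine'' or defer are exactly where the work lies, and your proposal does not handle them.

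\textbf{Running time.} You claim each successful chain strictly decreases the lexicographic vector $(|S_d|,|S_{d-1}|,\dots)$ and that this gives $\mathrm{poly}(n)$ successes. Lexicographic decrease of a length-$\Delta$ vector with entries in $[0,n]$ only bounds the number of steps by $(n+1)^\Delta$, which is quasi-polynomial even after initialising from Theorem~\ref{log} with $\Delta=O(\log n)$. To get a genuine polynomial bound you would need every success to decrease $|S_\Delta|$ itself, which in turn requires that \emph{no} vertex touched by the chain reaches degree $\Delta$; but your relaxed chains ``route through any vertex of in-degree greater than $(1-\epsilon)d$'', and such a vertex may well have degree $d-1$ and gain a child. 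This is not a technicality: controlling the degree increase of many intermediate vertices along a long chain is precisely why \cite{dmdst} is quasi-polynomial, and fixing it is the main point of the paper. The paper does \emph{not} use a lexicographic argument. It introduces a potential $\phi=\sum_w c^{\deg(w)}$ with $c=2\log^{0.4}n$, restricts to \emph{potential-efficient} augmenting paths in which the subtree potentials $\sum_{w\in\tree_{u_i}} c^{\deg(w)}$ decay geometrically like $\epsilon(1+\epsilon)^{-i}c^{k-1}$, and shows each success drops $\phi$ by a $1/\mathrm{poly}(n)$ fraction. The choice of $k$ is not $\Delta$ but $\arg\max_d (c/2)^d|N_d|$, which is needed both for the potential drop and for the counting arguments below.

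\textbf{The stuck case.} You correctly flag the ``bridge accounting'' as the main obstacle and then do not do it; the dichotomy you sketch (either the blocking family is hit, or many edges are forced onto low-degree bridges) is not how the directed lower bound works, and it is not clear it can be made to work. The paper's mechanism is different and cleaner: it grows sets $V_0=N_k,\,V_1,V_2,\ldots\subseteq N_{k-1}$ by one augmenting step at a time and continues as long as $|\bigcup_{j\le i}V_j|\ge(1+\epsilon)|\bigcup_{j\le i-1}V_j|$; when growth stalls, the standard witness lemma applied to $U=\bigcup_j U_j$ (the starting vertices of the chains, which are pairwise unrelated because each $\tree_{u_i}$ is required to avoid $S_{k-2}$) and $B=\bigcup_j V_j\cup S_{k+1}$ gives $\opt\ge |U|/|B|\ge (1-O(\epsilon))k$. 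The geometric-growth stopping rule, together with the potential-efficiency restriction, is what replaces your undeveloped bridge argument.

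In short: your outline has the right target but is missing both technical engines---the potential-efficient restriction with base $c=\Theta(\log^{0.4}n)$ for polynomial time, and the $(1+\epsilon)$-growth stopping rule for the lower bound---and your substitutes (lexicographic decrease, bridge averaging) do not work as stated.
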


To prove theorem \ref{log}, we start from the algorithmic framework of \cite{dmdst}. Their algorithm begins with an arbitrary directed spanning tree, and repeatedly conduct local searches. Every successful local search enables the algorithm to find a way to modify the current directed spanning tree such that a vertex with high tree in-degree loses a child at the cost that some vertices with low tree in-degree may gain a new child. When no such modifications can be found, an important lemma they have shown would argue a lower bound on $\opt$ which is related to some structures of the current directed spanning tree, thus ensuring an $O(\opt + \log n)$ approximation.

The time analysis of \cite{dmdst}'s algorithm is based on a potential function; more specifically, each vertex $u\in V$ is assigned a potential $\phi(u)$ which grows fast enough with $u$'s tree in-degree such that whenever a tree modification is carried out, the potential decrease due to the loss of a high-degree vertex always dominates the potential increase incurred by some low-degree vertices. The downside of their approach is that their algorithm does little effort to upper bound the potential increase from low-degree vertices during the local searches, and consequently it could come up with tree modifications where low-degree vertices contribute intensively to potential increase. In this case, the potential function drops fairly slowly and thus the time complexity becomes quasi-polynomial.

Our observation is that, instead of choosing a tree modification scheme instructed by an arbitrary successful local search, we could make a more prudent selection of successful local searches such that the potential increase brought about by low-degree vertices is small. As it turns out, this approach can significantly reduce the running time to a polynomial.

Now let us turn to Theorem~\ref{loglog}. Previously in Theorem~\ref{log}, when we have run out of good local searches, the algorithm would be stuck at an approximation of $O(\opt + \log n)$.  In order to go further than $O(\opt + \log n)$, we extend the original approach of local searches to what we will call ``augmenting paths''. Intuitively, although an unsuccessful local search does not directly give a way to lose one high-degree vertex, it might alter the structure of the current directed spanning tree such that new successful local searches may appear, and augmenting paths allow us to explore such possibilities. Speaking on a high-level, an augmenting path is a concatenation of a sequence of local searches which leads us, via a number of unsuccessful local searches,  finally to a successful one. In this way, augmenting paths may come up with more improvements on the spanning tree than the ordinary local searches are capable of.

\subsection{Related work}
There is a line of works that focus on minimum degree spanning trees in undirected graphs. For example there are two papers that focus on minimum degree spanning tree in undirected graphs. The first algorithm for approximate minimum degree spanning tree was proposed by \cite{furer1992approximating} where an $O(\opt + \log n)$ approximation can be computed in polynomial time. The approximation guarantee was shortly improved to the optimal $\opt + 1$ in \cite{furer1994approximating}, and the running time was improved to $\tilde{O}(mn)$.

There is also a line of works that are concerned with low-degree trees in weighted undirected graphs. In this scenario, the target low-degree that we wish to compute is constrained by two parameters: an upper bound $B$ on tree degree, an upper bound $C$ on the total weight summed over all tree edges. The problem was originally formulated in \cite{fischer1993optimizing}. Two subsequent papers \cite{konemann2000matter,konemann2003primal} proposed polynomial time algorithms that compute a tree with cost $\leq wC$ and degree $\leq \frac{w}{w-1}bB + \log_b n$, $\forall b, w> 1$. This result was substantially improved by \cite{chaudhuri2005would}; using certain augmenting path technique, their algorithm is capable of finding a tree with cost $\leq C$ and degree $B + O(\log n / \log\log n)$. Results and techniques from \cite{chaudhuri2005would} might sound similar to ours, but in directed graphs and undirected graphs we are actually faced with different technical difficulties. \cite{chaudhuri2005would}'s result was subsumed by \cite{goemans2006minimum} where $\forall k$, a spanning tree of degree $\leq k+2$ and of cost at most the cost of the optimum spanning tree of maximum degree at most $k$ can be computed in polynomial time. The degree bound was later further improved from $k+2$ to an optimal $k+1$ in \cite{singh2007approximating}.

Another variant is minimum degree Steiner trees which is related to network broadcasting \cite{ravi1994rapid,ravi1993many,fraigniaud2001approximation}. For undirected graphs, authors of \cite{furer1994approximating} showed that the same approximation guarantee and running time can be achieved as with minimum degree spanning trees in undirected graphs, i.e., a solution of tree degree $\opt + 1$ and a running time of $\tilde{O}(mn)$. For the directed case, \cite{fraigniaud2001approximation} showed that directed minimum degree Steiner trees problem cannot be approximated within $(1-\epsilon)\log|D|, \forall \epsilon > 0$ unless $\mathrm{NP}\subseteq \mathrm{DTIME}(n^{\log\log n})$, where $D$ is the set of terminals.

\section{Preliminaries}\label{notation}
Recalling the notations from the introductory section, $G = (V, E)$ is an arbitrary directed graph, and a sink $s\in V$. Let $n = |V|$, $m = |E|$. We assume $s$ is reachable from all vertices in $V$. Let $\opt$ be the tree degree of the optimal solution to the DMDST problem.

Both of algorithms in \ref{log} and \ref{loglog} will follow the algorithmic framework of \cite{dmdst}. During the course of our algorithms, we maintain a directed spanning tree $\tree$ and incrementally adjust the tree. For any $u\in V$, its tree in-degree, or simply degree, is the number of its children in $\tree$, which is denoted by $\deg(u)$, and let $\Delta = \max_{u\in V}\{\deg(u)\}$ keep track of the maximum tree in-degree of current $\tree$. For every $u\in V$ other than $s$, $\parent(u)$ refers to its parent in $\tree$. Let $\tree_u$ be the subtree of $\tree$ rooted at $u$. For each integer $d\in [0, \Delta]$, let $N_d$ be the set of all vertices whose tree in-degree is equal to $d$, and let $S_d$ be the set of vertices whose tree in-degree is $\geq d$. Simple path from $u$ to $v$ will be written as $u\rightsquigarrow v$. For the rest of this paper, $\log(\cdot)$ will have base $2$.

\begin{defn}
Two vertices $u, v\in V$ are {\rm unrelated} if $\tree_u\cap \tree_v$ is empty, that is, u is neither an ancestor nor a descendant of $v$; otherwise, $u, v$ are {\rm related}.
\end{defn}

Our algorithm are based on the iterative improvement procedure by improvement path, as in \cite{dmdst}. The improvement path is defined as:

\begin{defn}
Given a vertex $u\in V$ such that $\deg(\parent(u)) = d$, a simple path $u\rightsquigarrow w$ is called a {\rm $d$-improvement path}, if $w$ is the first vertex where this path steps out of $\tree_u$, and all vertices on this path other than $u$ have tree in-degree $\leq d-2$. A vertex $u$ is called {\rm $d$-improvable} if there exists a $d$-improvement that takes the form $u\rightsquigarrow w$.
\end{defn}

As discussed in \cite{dmdst}, $d$-improvement paths can be easily computed via a standard breath-first search which runs in polynomial time. It is also easy to see that if we have found a $d$-improvement path $u\rightsquigarrow w$, we can use the edges of this path as the outgoing edges in $T$ for all vertices on this path besides $w$, then the degrees of all vertices on $u\rightsquigarrow w$ besides $u$ will increase by 1, and the degree of $\parent(u)$ will decrease by 1. And all vertices in $T_u$ can reach $s$ in the updated tree, since the original paths to $s$ will be re-directed by  subpaths $u\rightsquigarrow w$. So we can get the following lemma, whose full proof is in~\cite{dmdst}.
\begin{lemma}[\cite{dmdst}]\label{improve}
Suppose $u\rightsquigarrow w$ is a $d$-improvement path. Then there is a way of adjusting $\tree$ in linear time so that: $\deg(\parent(u))$ decreases by 1; vertices on $u\rightsquigarrow w$ other than $u$ increase their tree in-degrees by at most 1; vertices not on $u\rightsquigarrow w$ do not change their tree in-degrees.
\end{lemma}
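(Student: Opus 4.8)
The plan is to write the modification down explicitly and then verify, in turn, that it runs in linear time, that it still produces a directed spanning tree rooted at $s$, and that the degrees change as claimed; essentially all the work lies in the middle step. I would fix notation by writing the $d$-improvement path as $P: u = x_0 \to x_1 \to \cdots \to x_k = w$, so that $x_0,\dots,x_{k-1}\in\tree_u$, $x_k = w \notin\tree_u$, and $\deg(x_i)\le d-2$ for every $i\ge 1$. The adjustment is the one anticipated just before the lemma: for each $i<k$ reset $\parent(x_i):=x_{i+1}$, reusing the path edge $(x_i,x_{i+1})\in E$ as the outgoing tree edge of $x_i$, and leave every other parent pointer untouched; a single scan of $P$ does this in $O(k)=O(n)$ time, which settles the running time.

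The heart of the argument is verifying that the modified structure $\tree'$ is again a directed spanning tree rooted at $s$. The structural fact I would record first is that in $\tree$ no ancestor of $w$ lies in $\tree_u$ — otherwise such an ancestor $y$ would satisfy $w\in\tree_y\subseteq\tree_u$, contradicting $w\notin\tree_u$. In particular $s\notin\{x_0,\dots,x_{k-1}\}$ (since $u\ne s$ forces $s\notin\tree_u$), so after the update $s$ is still the only vertex lacking a parent and every other vertex has exactly one parent; it only remains to rule out a directed cycle. Any such cycle must traverse one of the new edges $x_i\to x_{i+1}$ and therefore contain some $x_j$ with $j<k$; but in $\tree'$ the sole out-edge of $x_j$ goes to $x_{j+1}$, so the cycle is forced to contain $x_{j+1},\dots,x_k=w$ as well, after which its edges coincide with those of $\tree$ and run through the old ancestors of $w$ — all outside $\tree_u$, hence distinct from every $x_i$ — and terminate at $s$, contradicting that $w$ lies on a cycle. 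Since a parent assignment on a finite vertex set in which $s$ is the only parentless vertex is a directed spanning tree rooted at $s$ exactly when it has no directed cycle, $\tree'$ is valid.

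For the degree claims I would argue by bookkeeping which vertices gain or lose a child. Only the pointers of $x_0,\dots,x_{k-1}$ change, so the only possible child-gainers are the new parents $x_1,\dots,x_k$, and $x_{i+1}$ can acquire at most the single new child $x_i$; hence no vertex off $P$ gains a child, and each vertex of $P$ other than $u$ increases its in-degree by at most $1$. For $\parent(u)$ itself, the gap $\deg(\parent(u))=d>d-2\ge\deg(x_i)$ for $i\ge1$, together with $\parent(u)\ne u$, forces $\parent(u)\notin\{x_0,\dots,x_k\}$, so $\parent(u)$ gains nothing and loses exactly the child $u$, i.e.\ its in-degree drops by precisely $1$. (The only remaining effect is that an off-path old parent of some $x_i$ may lose a child, which merely lowers a degree and is harmless; thus for vertices off $P$ ``does not change'' is really ``does not increase''.)

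The step I expect to be the main obstacle is the no-cycle argument above. It is the only place where the defining property of a $d$-improvement path — that $w$ is the \emph{first} vertex stepping out of $\tree_u$ — is genuinely used, and it is what dictates rerouting the whole prefix $x_0,\dots,x_{k-1}$ rather than just $u$: rerouting $u$ alone would immediately close a cycle with the old $\tree$-path from $x_1$ back up to $u$, and the real content of the proof is that carrying the entire prefix out of $\tree_u$ along $P$ is precisely what severs every such cycle.
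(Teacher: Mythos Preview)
Your proof is correct and follows exactly the construction the paper sketches (and attributes in full to \cite{dmdst}): reroute the parent pointer of every $x_i$, $i<k$, to $x_{i+1}$, then verify the result is still a spanning tree via the no-cycle/reachability argument and tally the degree changes. You also correctly flag that an off-path old parent of some $x_i$ can lose a child, so the clause ``vertices not on $u\rightsquigarrow w$ do not change their tree in-degrees'' should really read ``do not increase''; this is harmless everywhere the lemma is invoked.
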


\section{Proof of Theorem~\ref{log}}\label{sect-log}

\subsection{Main algorithm}

Suppose in the current tree $T$ with maximum degree $\Delta$, if we can find a $\Delta$-improvement path, then by Lemma~\ref{improve} we can decrease the number of vertices with degree $\Delta$. However, sometimes such improvement paths is ``blocked'' by degree $\Delta-1$ vertices, so we need to first try to decrease the degree of those vertices. As we can see in Lemma~\ref{improve}, improving a degree $d$ vertices may increase the degrees of many other vertices, so it is more like a ``balancing'' procedure rather than simply ``improving''. Since it is hard to bound the number of degree increased vertices, \cite{dmdst} used a potential of $n^d$ for degree $d$ vertices to make sure the total potential decreases after each improvement step, thus had running time $n^{O(\log n)}$. By a careful selection of potential functions and improvement paths, we give a polynomial time algorithm for the same approximate ratio as~\cite{dmdst} in this section. In the next section, we further improve the approximate ratio while maintaining polynomial running time.

For each vertex $w\in V$, define its potential $\phi(w) = 2^{\deg(w)}$. Then define potential function to be the sum over all vertex potentials, i.e., $$\phi = \phi(\tree) = \sum_{w\in V}\phi(w) = \sum_{i = 0}^\Delta 2^i\cdot |N_i|$$

We propose the following algorithm \ref{poly}. Then we will prove that this algorithm computes an $O(\opt + \log n)$ approximation to the DMDST problem in polynomial time, which would immediately conclude theorem \ref{log}.

\begin{algorithm}
	\caption{Approximate DMDST algorithm}\label{poly}
	\While{$\Delta > 34\log  n$}{
		let $k = \arg\max_d \{2^d|N_d|\}$\;
		flag $=$ false\;
		\For{\rm $u$ whose parent is in $N_k$}{
			define $\psi_u = \sum_{v\in \tree_u\setminus S_{k-1}} 2^{\deg(v)}$\;
			\If{$\psi_u \leq 2^{k-3}$}{
				try to find a $k$-improvement path starting at $u$ using breath-first search\;
				if successful, flag $=$ true, and adjust $\tree$ accordingly\;
				\textbf{break}\;
			}
		}
		\If{\rm flag is false}{
			\Return $\tree$\;
		}
	}
	\Return $\tree$\;
\end{algorithm}

\subsection{Correctness}
We need two important lemmas from \cite{dmdst}.
\begin{lemma}[\cite{dmdst}]\label{witness}
Suppose there are subsets of vertices $U, B$ with the following two properties.
\begin{enumerate}[1.]\setlength{\itemsep}{0pt}
	\item Any path from $v\in U$ to $s$ must have an incoming edge into a vertex in $B$.
	\item For any two vertices $v,w \in U$, any path from $v$ to $s$ can intersect a path from $w$ to $s$ only after it passes through a vertex in $B$.
\end{enumerate}
In this case we call $B$ {\rm blocks} $U$. Then $\opt \geq |U| / |B|$.
\end{lemma}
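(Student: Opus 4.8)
My proposal is to prove the inequality by charging the vertices of $U$ against the edges of a fixed optimal tree. Let $T^*$ be a directed spanning tree of $G$ of degree $\opt$; since all edges of $T^*$ point toward $s$, each vertex has a unique path to $s$ in $T^*$, which for $v\in U$ I denote $P_v$. First I would invoke property~1: along $P_v$ there is at least one edge entering $B$, so there is a well-defined \emph{first} vertex $b(v)\in B$ met by $P_v$ on the way from $v$ to $s$; let $e_v$ be the edge of $P_v$ that enters $b(v)$. By the choice of $b(v)$, no vertex of $P_v$ strictly before $b(v)$ lies in $B$; in particular the tail of $e_v$ is outside $B$.

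The crux of the argument is the claim that $v\mapsto e_v$ is injective on $U$. Suppose $e_v=e_w$ for distinct $v,w\in U$, and let $x$ be the common tail of this edge. Then $x$ is a vertex on both $P_v$ and $P_w$, and on $P_v$ it occurs before $b(v)$, so the part of $P_v$ from $v$ to $x$ avoids $B$. But $P_v$ and $P_w$ meet at $x$, so by property~2 the path $P_v$ must already have passed through $B$ by the time it reaches $x$ --- a contradiction. Hence the $e_v$, $v\in U$, are pairwise distinct. (A little care is needed only in the degenerate situation where $v$ or $w$ itself lies in $B$; in the intended applications $U$ and $B$ are disjoint and this does not occur.)

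To conclude, I would partition $U$ according to the value of $b(v)$. For a fixed $b\in B$, every $e_v$ with $b(v)=b$ is a tree edge of $T^*$ entering $b$, and $T^*$ has at most $\deg_{T^*}(b)\le\opt$ such edges; since the $e_v$ are distinct, at most $\opt$ vertices of $U$ are sent to $b$. Summing over $b\in B$ gives $|U|=\sum_{b\in B}|\{v\in U: b(v)=b\}|\le|B|\cdot\opt$, i.e.\ $\opt\ge|U|/|B|$.

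The one step I expect to need the most care is the injectivity claim: one has to use the minimality of $b(v)$ to locate the common tail $x$ strictly before any $B$-vertex on $P_v$, and then apply property~2 with $P_v$ and $P_w$ in the correct roles. The rest is a counting identity, and nothing about $G$ is used beyond $T^*$ being an in-arborescence of maximum in-degree $\opt$.
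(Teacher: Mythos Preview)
Your proof is correct and follows essentially the same approach as the paper: fix an optimal tree $T^*$, for each $v\in U$ take the first $B$-vertex $b(v)$ along the $T^*$-path from $v$ to $s$, use property~2 to argue that the resulting incoming edges $e_v$ into $B$ are pairwise distinct, and conclude by pigeonhole that some $b\in B$ receives at least $|U|/|B|$ of them. The paper phrases the disjointness step at the level of the prefix paths $\rho_u$ rather than the edges $e_v$, but the argument is the same.
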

\begin{proof}
	Let $\tree^*$ be an optimal directed spanning tree rooted at $s$. For all $u\neq U$, let $f_u\in B$ be the first vertex where the tree path from $u$ to $s$ intersects $B$; by property 1., $f_u$ exists for all $u\in U$. Let $\rho_u$ be the tree path from $u$ to $f_u$. By property 2., all $|U|$ paths $\{\rho_u\mid u\in U\}$ are disjoint. Since every $\rho_u$ has an incoming edge to some vertex in $B$, and by disjointness all these $|U|$ edges are different, by the pigeon-hole principle at least one of the vertex $v\in B$ has tree in-degree $\geq |U| / |B|$.
\end{proof}

\begin{lemma}[Implicit in \cite{dmdst}]\label{unrelated}
For any $d$, there are at least $(d-1)|N_d| + 1$ unrelated vertices whose parents are in $N_d$.
\end{lemma}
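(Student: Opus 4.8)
The plan is to build the required set of unrelated vertices by looking, for each vertex $x \in N_d$, at its $d$ children in the tree $\tree$. The natural first attempt is to take the union over all $x \in N_d$ of the children-sets: this gives exactly $d \cdot |N_d|$ vertices, and any two children of the same parent are unrelated to each other. The difficulty is that children of different parents in $N_d$ need not be unrelated — a child of $x_1$ may be an ancestor of a child of $x_2$ if $x_2 \in N_d$ happens to lie below that child. So the first real step is to understand the ancestor/descendant structure that $N_d$ induces on these children.

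The key observation I would exploit is that the vertices of $N_d$, together with $s$, form a forest under the ``nearest $N_d \cup \{s\}$-ancestor'' relation: for $x \in N_d$, let $p(x)$ be the first vertex of $N_d \cup \{s\}$ strictly above $x$ on the tree path to $s$. This makes $N_d$ into a rooted forest $F$ (rooted towards $s$). Now I would process $N_d$ from the bottom of $F$ upward. For a leaf $x$ of $F$, all $d$ of its children in $\tree$ are unrelated to everything chosen so far, so I can add all $d$ of them. For a non-leaf $x$ with $t \geq 1$ children in $F$, at most one of $x$'s $d$ tree-children can contain a given $F$-child of $x$ in its subtree — but more carefully, the $F$-children of $x$ are distributed among the subtrees hanging off $x$'s $d$ tree-children, so at most $t$ of $x$'s tree-children are ``contaminated'' (have an $F$-descendant of $x$ below them); the remaining $\ge d - t$ tree-children of $x$ are unrelated to everything chosen in the subtrees below. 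I would add those $\ge d-t$ vertices. Summing over all $x\in N_d$: the total is at least $\sum_{x\in N_d}(d - t_x)$ where $\sum_x t_x$ counts the edges of $F$, which is $|N_d| - (\text{number of roots of } F) \le |N_d| - 1$. Hence the total is at least $d|N_d| - (|N_d| - 1) = (d-1)|N_d| + 1$, as required, and by construction all chosen vertices are pairwise unrelated and each has its parent in $N_d$.

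I would want to double-check one subtlety in the counting: when I claim ``at most $t_x$ tree-children of $x$ are contaminated,'' I am using that each $F$-child $y$ of $x$ sits in the subtree of exactly one tree-child of $x$, so the number of contaminated tree-children is at most the number of $F$-children, namely $t_x$; this is clean. A second point to verify is that a vertex added for $x$ is genuinely unrelated to a vertex added for some other $x' \in N_d$: if $x'$ is below $x$ in $F$, then $x'$ lies in the subtree of some tree-child of $x$, and that tree-child is contaminated and hence was not chosen for $x$, so the vertex chosen for $x$ is not an ancestor of $x'$ (nor of anything below $x'$); if $x,x'$ are $F$-incomparable the argument is symmetric via their common structure. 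The main obstacle is therefore not any hard inequality but getting this forest bookkeeping exactly right so that ``contaminated'' children are reliably excluded and the $-(|N_d|-1)$ loss is not double-counted; once the forest $F$ is set up, everything else is a short induction.
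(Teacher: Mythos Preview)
Your argument is correct and is essentially the same idea as the paper's, just organized globally rather than inductively. The paper proves the lemma by a short induction on $|N_d|$: pick a vertex $u\in N_d$ with no $N_d$-descendant, apply the inductive hypothesis to $N_d\setminus\{u\}$ to get $(d-1)(|N_d|-1)+1$ unrelated vertices, observe that at most one of them is an ancestor of $u$, discard it, and add all $d$ children of $u$. Unrolling this induction recovers exactly your forest bookkeeping: each inductive step accounts for one edge of your auxiliary forest $F$, so the total loss is $|N_d|-r\le |N_d|-1$ just as you compute. Your version has the merit of making the ``one loss per $F$-edge'' structure explicit and of immediately showing the slightly sharper bound $(d-1)|N_d|+r$ where $r$ is the number of $F$-roots; the paper's version is shorter and avoids setting up $F$ at all.
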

\begin{proof}
	We basically follow the same lines as in \cite{dmdst}. The proof is by induction on the cardinality of $N_d$.
	\begin{itemize}\setlength{\itemsep}{0pt}
		\item Basis: $|N_d| = 1$. The single vertex of this set has exactly $d$ children which are unrelated.
		\item Induction: $|N_d| > 1$. Find a vertex $u\in N_d$ such that no other $v\in N_d\setminus \{u\}$ is a descendent of $u$ (for example, the $u$ with the largest depth in $\tree$). By induction, there are at least $(d-1)(|N_d|-1) + 1$ vertices whose parents are $\in N_d\setminus \{u\}$. Since these vertices are unrelated, at most one of them, say $w$, is an ancestor of $u$. Then, removing $w$ (if it exists) and adding all $d$ children of $u$, we have obtained a set of $\geq (d-1)(|N_d|-1) + 1 -1 + d = (d-1)|N_d| + 1$ unrelated vertices whose parents are in $N_d$.
	\end{itemize}
\end{proof}

We prove when the algorithm terminates, $\Delta = O(\opt + \log n)$. If the algorithm terminates on line-12, then $\Delta \leq 34 \log n = O(\opt + \log n)$. Now we assume the algorithm terminates on line-11.

Firstly we observe that $k\geq \Delta - \log n > 33\log n$; this is because for any $d<\Delta - \log n$, $2^d |N_d| < 2^{\Delta - \log n} n \leq 2^{\Delta} \leq 2^\Delta |N_\Delta|$. By lemma \ref{unrelated}, we can find a subset of unrelated vertices $W$ whose parents are in $N_k$ and $|W| > (k-1)|N_k|$. When the algorithm terminates on line-11, we can only guarantee that there is no $k$-improvement path from $u\in W$ with smaller $\psi(u)$. However, we can bound the number of $u\in W$ that still has a $k$-improvement path from $u$, so that:
\begin{lemma}\label{not-improvable}
When the algorithm terminates, the number of unrelated children of $N_k$ that are not $k$-improvable path is at least $(k-1)|N_k|/2$.
\end{lemma}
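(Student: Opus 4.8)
The plan is to bound the number of "bad" vertices $u \in W$ that still admit a $k$-improvement path, and show this number is at most $(k-1)|N_k|/2$, so that the remaining at least $(k-1)|N_k|/2$ unrelated children of $N_k$ are not $k$-improvable. The key observation is that when Algorithm~\ref{poly} terminates on line-11, the \textbf{for} loop ran over all $u$ whose parent is in $N_k$ without setting the flag; in particular, for every such $u$, either $\psi_u > 2^{k-3}$, or $\psi_u \le 2^{k-3}$ but no $k$-improvement path starts at $u$. So the vertices in $W$ that are still $k$-improvable must all satisfy $\psi_u > 2^{k-3}$. Thus it suffices to show: the number of $u \in W$ with $\psi_u > 2^{k-3}$ is at most $(k-1)|N_k|/2$.

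To prove this, I would use a charging/packing argument based on the potential function. Recall $\psi_u = \sum_{v \in \tree_u \setminus S_{k-1}} 2^{\deg(v)}$ counts the potential of low-degree descendants of $u$. Since the vertices of $W$ are unrelated, their subtrees $\{\tree_u : u \in W\}$ are pairwise disjoint, and moreover each such subtree contributes its own disjoint set of low-degree vertices. Hence
\[
\sum_{u \in W} \psi_u \;=\; \sum_{u \in W} \sum_{v \in \tree_u \setminus S_{k-1}} 2^{\deg(v)} \;\le\; \sum_{v \in V \setminus S_{k-1}} 2^{\deg(v)} \;=\; \sum_{i=0}^{k-2} 2^i |N_i|.
\]
Now I would bound the right-hand side using the maximality of $k$: since $k = \arg\max_d 2^d |N_d|$, we have $2^i |N_i| \le 2^k |N_k|$ for every $i$, so $\sum_{i=0}^{k-2} 2^i |N_i| \le (k-1) \cdot 2^k |N_k|$. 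Therefore $\sum_{u \in W} \psi_u \le (k-1) 2^k |N_k|$. If there were more than $(k-1)|N_k|/2$ vertices $u \in W$ with $\psi_u > 2^{k-3}$, their contribution alone would exceed $\frac{(k-1)|N_k|}{2} \cdot 2^{k-3} = (k-1)|N_k| \cdot 2^{k-4}$, which is a constant-factor (namely $1/16$) fraction of the bound $(k-1)2^k|N_k|$ — so this simple counting is not yet tight enough, and I will need to sharpen the estimate of $\sum_{i=0}^{k-2} 2^i|N_i|$.

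The main obstacle, then, is getting a sufficiently tight bound on $\sum_{i=0}^{k-2} 2^i|N_i|$ relative to $2^k|N_k|$. The fix I anticipate: split the sum at $i = k - \log n$ (or thereabouts). For $i < k - \log n$ we use $2^i|N_i| \le 2^{i} n \le 2^k$, and there are at most $k$ such terms, contributing at most $k \cdot 2^k$, which is negligible compared to $(k-1)2^k|N_k|$ only if $|N_k|$ is not too small — but here I can instead use $|N_k| \ge 1$ together with the fact that $k > 33\log n$ is large. For the range $k - \log n \le i \le k-2$ there are only about $\log n$ terms, each at most $2^k|N_k|$, contributing at most $(\log n) 2^k |N_k|$. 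Combining, $\sum_{i=0}^{k-2} 2^i|N_i| = O((\log n) 2^k |N_k|)$, and since $k = \Omega(\log n)$ the quantity $(k-1)|N_k|$ dominates, so the number of $u \in W$ with $\psi_u > 2^{k-3}$ is at most $O((\log n) 2^k|N_k| / 2^{k-3}) = O((\log n)|N_k|) \le (k-1)|N_k|/2$ once the constant in the threshold $34 \log n$ (equivalently $k > 33\log n$) is chosen large enough. Then $|W| - (k-1)|N_k|/2 \ge (k-1)|N_k| - (k-1)|N_k|/2 = (k-1)|N_k|/2$ vertices of $W$ are not $k$-improvable, which is the claim. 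The delicate point is matching the precise constant $34$ in the algorithm's while-condition to the constants $2^{k-3}$ and $2^{k-1}$ appearing in the definitions, so I would carry the constants carefully through this last step.
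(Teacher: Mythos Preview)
Your approach is essentially the paper's: both arguments use disjointness of the subtrees $\tree_u$ for $u\in W$ to bound $\sum_{u\in W}\psi_u\le\sum_{d=0}^{k-2}2^d|N_d|$, then split this sum into a low range (bounded trivially via $|N_d|\le n$) and a high range of about $\log n$ terms (each $\le 2^k|N_k|$ by maximality of $k$). The paper phrases the final step as a pigeonhole contradiction (if at least half of $W$ were $k$-improvable, one of them would have $\psi_u\le 2^{k-3}$), while you phrase it as a Markov-type count of those $u$ with $\psi_u>2^{k-3}$; these are equivalent.

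One arithmetic point needs fixing: your low-range estimate ``at most $k$ terms, each $\le 2^k$, so $\le k\cdot 2^k$'' is too crude and does \emph{not} go through when $|N_k|=1$ (after dividing by $2^{k-3}$ you would need $8k\le (k-1)/2$, which is false). The remedy you hint at---using $|N_k|\ge 1$ and $k>33\log n$---does not rescue this bound. Instead, sum the geometric series directly:
\[
\sum_{i<k-\log n}2^i|N_i|\;\le\; n\sum_{i<k-\log n}2^i \;<\; n\cdot 2^{k-\log n}\;=\;2^k.
\]
Then the total is at most $2^k+(\log n)\,2^k|N_k|$, and the number of $u\in W$ with $\psi_u>2^{k-3}$ is at most $8+8(\log n)|N_k|<(k-1)|N_k|/2$ once $k>33\log n$. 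This is exactly how the paper handles the tail (splitting at $\Delta-2\log n$ and bounding the low part by $n\cdot 2^{\Delta-2\log n-1}$ via the geometric sum).
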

\begin{proof}
By lemma~\ref{unrelated}, find a subset of unrelated vertices $W$ whose parents are in $N_k$ and $|W| > (k-1)|N_k|$.
Since all vertices in $W$ are unrelated, we have:
$$\sum_{u\in W, k\text{-improvable}} \psi_u = \sum_{u\in W, k\text{-improvable}} \sum_{v\in \tree_u\setminus S_{k-1}} 2^{\deg(v)}\leq \sum_{d=0}^{k-2}2^d |N_d|$$
Suppose at least half of $u\in W$ are $k$-improvable, by the pigeon hole principle, there exists $u\in W$ with a $k$-improvement path, such that
$$\begin{aligned}
\psi_u &\leq \frac{2}{|W|}\sum_{d=0}^{k-2} 2^d |N_d| < \frac{2}{|W|}\sum_{d = \Delta - 2\log n}^{k-2}2^d |N_d| + \frac{2}{|W|} n 2^{\Delta - 2\log n -1}\\
&\leq \frac{2}{|W|}\cdot 2\log n\cdot 2^{k}|N_k| + \frac{2}{|W|} n 2^{\Delta - 2\log n -1}	&	(\text{maximality of }k)\\
&\leq \frac{2}{|W|}\cdot 2\log n\cdot 2^{k}|N_k| + \frac{2}{|W|}n 2^{k-\log n -1}	&	(k\geq \Delta - \log n)\\
&< \frac{2}{(k-1)|N_k|} \cdot 2\log n \cdot 2^{k}|N_k| + \frac{2}{(k-1)|N_k|}  2^{k-1}	&	(|W| > (k-1)|N_k|)\\
&\leq \frac{32}{33} 2^{k-3} + \frac{8}{33\log n} 2^{k-3} \leq 2^{k-3}	&	(n\geq 256)
\end{aligned}$$
This contradicts the termination condition of our algorithm because the flag variable would be set true. Hence, at least half of $u\in W$ are not $k$-improvable, and we get the conclusion.
\end{proof}

Let $U$ be the set of all such vertices from Lemma~\ref{not-improvable}, then we claim $S_{k-1}$ blocks $U$. Check the two properties described in lemma \ref{witness}.
\begin{enumerate}[(1)]\setlength{\itemsep}{0pt}
	\item Let $u\rightsquigarrow s$ be any path from $u\in U$ to the sink $s$. Let $w$ be the first vertex on $u\rightsquigarrow s$ out of subtree $\tree_u$. As $u$ is not $k$-improvable, at least one vertex on $u\rightsquigarrow w$ other than $u$ belongs to $S_{k-1}$.
	\item Let $u, v\in U$ be two different vertices, and $u\rightsquigarrow s$, $v\rightsquigarrow s$ are paths to the sink $s$ from $u, v$ respectively. As $U$ only consists of unrelated vertices, before the two paths $u\rightsquigarrow s$ and $v\rightsquigarrow s$ intersect, one of them, say $u$, must have stepped out of the subtree $\tree_u$, then by the same arguments in (1), $u\rightsquigarrow s$ has intersected $S_{k-1}$ before it meets $v\rightsquigarrow s$.
\end{enumerate}

Then by Lemma~\ref{witness} and Lemma~\ref{not-improvable}, we have $$\opt \geq |U| / |S_{k-1}| \geq \frac{(k-1)|N_k|}{2|S_{k-1}|}$$

By maximality of $2^k|N_k|$, $|N_k|\geq 2^i |N_{k+i}|$ for all $i$. Therefore, $$|S_{k-1}|= \sum_{i=-1}^{\Delta -k}|N_{k+i}| \leq \sum_{i=-1}^{\Delta -k}2^{-i} |N_{k}|\leq 4|N_k|$$

Hence, $$\opt \geq \frac{(k-1)|N_k|}{2|S_{k-1}|}\geq \frac{(k-1)|N_k|}{8|N_k|}\geq \frac{k-1}{8} \geq \frac{\Delta - \log n-1}{8}$$ which immediately yields $\Delta = O(\opt + \log n)$.

\subsection{Running time}
We upper bound the total running time by an analysis of the potential $\phi$. On the one hand, consider those vertices whose potential decrease after a tree adjustment associated with a $k$-improvement. By lemma \ref{improve}, one vertex from $N_k$ encounters a decrement of in-degree which induces a decrease of $2^{k-1}$ of vertex potential. On the other hand, consider those vertices whose vertex potentials increase after such an adjustment. Potential increase accumulated from $\tree_u$ would not exceed $\psi_u\leq 2^{k-3}$ because, as stated in lemma \ref{improve}, all vertices in $\tree_u$ increase their tree in-degrees by at most 1; plus, the vertex potential increase outside $\tree_u$ would not exceed $2^{k-2}$. Therefore, the overall potential loss is at least $2^{k-1} - 2^{k-2} - 2^{k-3} = 2^{k-3}$. As $k\geq \Delta - \log n$, 
$$2^{k-3}\geq \frac{1}{8n} 2^{\Delta} = \frac{1}{8n^2} n 2^\Delta = \Omega(\frac{\phi}{n^2})$$
The last inequality holds because the largest in-degree is currently $\Delta$ and consequently $\phi = \sum_{v\in V}2^{\deg(v)} \leq n\cdot 2^\Delta$.

In other words, each improvement path reduces $\phi$ by a factor of $1 - \Omega(1/n^2)$. Since $n < \phi < n\cdot 2^n$, the total number of iterations is bounded by $$\log_{1 - \Omega(1/n^2)}2^{-n}=O(n^3)$$. Each iteration is dominated by $O(n)$ breath-first searches, which then takes $O(mn)$ time. Hence the total running time would be $O(mn^4)$ which is polynomial.
\section{Proof of Theorem~\ref{loglog}}\label{sect-loglog}
\subsection{Sketch of algorithm}
In this section we will prove theorem \ref{loglog}. The first ingredient is that we extend the idea of $k$-improvement path. Intuitively speaking, we relax the condition that a $k$-improvement path must end at a vertex whose tree in-degree is strictly less than $k-1$ by allowing endpoints to have tree in-degree of exactly $k-1$. If this endpoint is of $k-1$ tree in-degree, then in order to eventually reduce the size of $|N_k|$, we need to find a path that starts at one of its children, traversing the corresponding subtree, and end at another vertex of tree in-degree $\leq k-1$. We repeat this procedure until we end at a vertex of tree in-degree strictly less than $k-1$. In a nutshell, we compute a sequence of paths, which will be called an \emph{augmenting path}, so that, after we adjust the $\tree$ for each path, $|N_k|$ can be decreased.

There are two technical hindrances to this approach. 
\begin{enumerate}[(1)]\setlength{\itemsep}{0pt}
	\item The sequence of paths may intersect with themselves. If this could happen, our adjustment may increase some values of $\deg(\cdot)$ significantly. To resolve this issue, an important observation is that if we restrict ourselves to subtrees that contain no vertices of tree in-degree $\geq k-2$, then this sequence of paths basically would not have intermediate self-intersections, and it still works if it ends with some vertex that already appeared.
	\item As we adjust along many paths, a huge number of vertices may increase tree in-degrees, which leads to a significant increase of potential. If we do not have good upper bounds on such kind of potential increase, it may neutralize the potential decrease brought about by the decrement of $|N_k|$.
	
	To overcome this difficulty, first note that potential increase resulted by a path is more or less bounded by the total vertex potential below the subtree rooted at the start-vertex of this path. So to give bounds on potential increase, it suffices to bound the sum of all vertex potentials in all of these subtrees, say $\tree_{u_1}, \tree_{u_2}, \cdots, \tree_{u_l}$. The core of our algorithm is that, we try to make sure $\sum_{w\in \tree_{u_i} \setminus S_{k-2}}\phi(w)$ decreases geometrically with respect to $i$, say $\sum_{w\in \tree_{u_i} \setminus S_{k-2}}\phi(w)\propto 1 / (1+\epsilon)^i$. If at one point, after we already have found $u_1, u_2, \cdots, u_{h-1}$, we could no longer find a $u_h$ that satisfies the desired geometric bound, we then apply lemma \ref{witness} to prove a lower bound on $\opt$ and terminate the algorithm.
\end{enumerate}

\subsection{Augmenting paths}
Formulate the idea of augmenting paths in the definition below.
\begin{defn}
A {\rm $k$-augmenting path} is a sequence of $l$ simple paths $u_1\rightsquigarrow v_1, u_2\rightsquigarrow v_2, \cdots\cdots, u_l\rightsquigarrow v_l$ with properties below.
\begin{enumerate}[(i)]\setlength{\itemsep}{0pt}
\item $v_i = \parent(u_{i+1}), \forall i<l$.
\item All $u_i$'s are unrelated, and all $v_i$'s are different.
\item $\deg(\parent(u_1)) = k$; $\deg(v_i) = k-1$, $\forall 1\leq i < l$; $\deg(v_l)\leq k-1$.
\item Every subtree $\tree_{u_i}$ does not contain any vertices of tree in-degree $\geq k-2$.
\item For any path $u_i\rightsquigarrow v_i, i\in[l]$, it does not contain any vertices of tree in-degree $\geq k-2$ except for $v_i$, and $v_i$ is the first vertex where this path steps out of the subtree $\tree_{u_i}$.
\end{enumerate}
\end{defn}

We now argue in the next lemma that any $k$-augmenting path gives a tree adjustment scheme for decreasing the size of $N_k$.
\begin{lemma}\label{aug}
Let $u_1\rightsquigarrow v_1, u_2\rightsquigarrow v_2, \cdots\cdots, u_l\rightsquigarrow v_l$ be a $k$-augmenting path. If $\deg(v_l)\leq k-2$, then we can adjust tree $\tree$ in using this $k$-augmenting path in polynomial time so that the following requirements are met.
\begin{enumerate}[(1)]\setlength{\itemsep}{0pt}
	\item $\parent(u_1)$ moves from $N_k$ to $N_{k-1}$; namely its tree in-degree decreases by 1.
	\item If $\deg(v_l) \leq k-3$ before the adjustment is carried out, then $\deg(v_l)$ increases by at most 2 afterwards; otherwise $\deg(v_l) = k-1$. All other $\deg(v_i)$ stay unchanged.
	\item All vertices other than $u_i, v_i, i\in [l]$ on this $k$-augmenting path increases their tree in-degree by at most 1.
	\item All $|N_i|$'s do not increase, $\forall i>k$.
\end{enumerate}
\end{lemma}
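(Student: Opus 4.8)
\section*{Proof proposal}

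The plan is to generalize the single-path re-routing behind Lemma~\ref{improve} and apply it to all $l$ constituent paths simultaneously. Write $P_i : u_i = x^i_0 \to x^i_1 \to \cdots \to x^i_{p_i} = v_i$, and define the new tree by setting $\parent(x^i_a) := x^i_{a+1}$ for every $i \in [l]$ and every $0 \le a < p_i$, leaving all other parent pointers unchanged. First I would check this is well defined: by property (ii) the subtrees $\tree_{u_i}$ are pairwise disjoint, and by (iv)--(v) every internal vertex $x^i_a$ ($0 < a < p_i$) lies in $\tree_{u_i}$ while $v_i$ is the only vertex of $P_i$ outside $\tree_{u_i}$; combined with the degree constraints in (iii)--(v), each redirected vertex $x^i_a$ ($0 \le a < p_i$) occurs on a unique $P_i$ and occurs there only once, so no pointer is reassigned twice.

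Second, I would show the new parent function still encodes a spanning tree rooted at $s$, i.e.\ that following new parents from any vertex reaches $s$ (equivalently, there is no cycle). From a redirected vertex $x^i_a$ one marches forward along $P_i$ to $v_i$; from any other vertex one first climbs the old tree until meeting the first redirected vertex, if any, and then continues as before. So everything reduces to: from each $v_i$ one reaches $s$. For $i < l$, property (iii) gives $\deg(v_i) = k-1$, so by (iv) $v_i$ lies in no $\tree_{u_j}$; climbing the old tree from $v_i$ therefore never meets a redirected vertex and reaches $s$. For $i = l$: if $\deg(v_l) \ge k-2$ the same holds; if $\deg(v_l) \le k-3$, climbing the old tree from $v_l$ may hit a redirected vertex $x^j_b$, but then $v_l \in \tree_{u_j}$, forcing $j \ne l$ (since $v_l \notin \tree_{u_l}$), and we are routed forward along $P_j$ to $v_j$ with $j<l$, finishing by the previous case. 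Hence no vertex lies on a cycle, the new graph is a valid arborescence rooted at $s$, and the whole re-routing is clearly linear time.

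Third, I would carry out the degree bookkeeping, which is where most of the care goes. The redirect along $P_i$ lets $x^i_a$ acquire the child $x^i_{a-1}$ for $1 \le a \le p_i$ (an increase of at most $1$ each), and makes $\parent(u_i)$ lose the child $u_i$; nothing else changes. Collating these, using disjointness of the $\tree_{u_i}$ and distinctness of the $v_i$: the vertex $\parent(u_1)$ has degree $k$ and plays no other role, so it only loses $u_1$, giving (1); each $v_i$ with $i<l$ gains exactly one child ($x^i_{p_i-1}$) and, since $v_i = \parent(u_{i+1})$ by (i), loses exactly one child ($u_{i+1}$), for net zero, giving the ``unchanged'' part of (2); any vertex internal to some $P_i$ but equal to no $u_j$ or $v_j$ lies in a unique $\tree_{u_i}$, is touched only by $P_i$'s redirect, and gains at most one child, giving (3); and $v_l$ gains one child from $P_l$'s redirect, plus --- only when $\deg(v_l) \le k-3$ --- possibly one more from happening to be an internal vertex of a single earlier path $P_j$ (it cannot be internal to two, by disjointness again), yielding the ``$+2$'' case of (2), while if $\deg(v_l)=k-2$ it is internal to no path (those have degree $\le k-3$), gains exactly one, and lands at $k-1$. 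Finally the only vertices whose degree rises are the $v_i$ (staying $\le k-1$ for $i<l$, and $\le k$ for $i=l$ since $\deg(v_l) \le k-2$ by hypothesis and it rises by at most $2$) and internal vertices (rising from $\le k-3$ to $\le k-2$); no degree-$(>k)$ vertex gains or loses a child, so no $|N_i|$ with $i>k$ increases, giving (4).

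The main obstacle I expect is precisely the case $\deg(v_l)\le k-3$: this is where an endpoint may coincide with a previously visited internal vertex (the ``self-intersection'' of the sketch), which is simultaneously why the validity argument needs the extra forwarding step through some $P_j$ with $j<l$ and why the bound on $\deg(v_l)$ is $+2$ rather than $+1$. Verifying that no other vertex can pick up two new children is the delicate point, and it rests entirely on the disjointness of the $\tree_{u_i}$ (property (ii)) and the degree separations in (iii)--(v).
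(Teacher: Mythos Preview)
Your proof is correct and follows essentially the same approach as the paper: perform the single-path re-routing of Lemma~\ref{improve} along every constituent path $u_i\rightsquigarrow v_i$, then bound degree changes using the pairwise disjointness of the subtrees $\tree_{u_i}$ together with the degree constraints in (iii)--(v). Your degree bookkeeping matches the paper's case analysis (in particular, the observation that the only possible self-intersection of the paths is at $v_l$, and only when $\deg(v_l)\le k-3$). In one respect your argument is more complete: you explicitly verify that the new parent function still defines an arborescence rooted at $s$, including the delicate case where $v_l$ lies in some $\tree_{u_j}$ and the walk from $v_l$ must first be routed along $P_j$ to $v_j$ (with $j<l$) before reaching $s$; the paper leaves this validity check implicit.
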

\begin{proof}
	Consider this procedure: for each simple path $u_i\rightsquigarrow v_i$, denote it by $u_i = w_1\rightarrow w_2\rightarrow\cdots\rightarrow w_h = v_i$. For every $1\leq j<h$, cut $w_j$ off of its parent in $\tree$ and append it right below $w_{j+1}$. Then only vertices on this $k$-augmenting path may increase in-degrees, so we only focus on these vertices.
	
	We claim all paths $u_i\rightsquigarrow v_i$'s do not intersect except at vertex $v_l$. In fact, for any two different $u_i\rightsquigarrow v_i$ and $u_j\rightsquigarrow v_j$ that intersect with each other, as $u_i$ and $u_j$ are unrelated, the intersection point can only be $v_i$ or $v_j$ due to (\romannumeral5); then, again by (\romannumeral2) $v_i\neq v_j$, one of them belongs to the opposite subtree, say $v_j\in \tree_{u_i}$. Since according to (\romannumeral4) $\tree_{u_i}$ does not contain any vertices of tree in-degree $\geq k-2$, it must be $\deg(v_j) < k-2$, and thus $j = l$. Therefore, before the last vertex $v_l$ accepts a new branch, 
	\begin{enumerate}[(a)]\setlength{\itemsep}{0pt}
		\item $\parent(u_1)$ becomes of $(k-1)$-degree, and all $v_i$ stays $(k-1)$-degree.
		\item any vertex on $u_i\rightsquigarrow v_i$ encounters an increase of tree in-degree by at most 1;
	\end{enumerate}
	As paths $u_i\rightsquigarrow v_i$ contain no vertices of tree in-degree $\geq k-2$ except for $v_i$, none of those vertices belong to any $N_j, j>k$ after the adjustment. Hence no $|N_j|, j>k$ has increased.
	
	Now consider the last step when $v_l$ accepts a new branch. There are three cases.
	
	\begin{enumerate}[(a)]\setlength{\itemsep}{0pt}
		\item If $v_l$ has appeared twice in this $k$-augmenting path, then $v_l$ is contained in an $\tree_{u_i}$ at the beginning, and thus its in-degree was $\leq k-3$ right from the start, as $\tree_{u_i}\cap (N_{k-1}\cup N_{k-2})$ was empty. As $v_l$'s in-degree increases by at most 2, its tree in-degree is strictly less than $k$ after all of our cut-and-append procedures.
		\item If $v_l$ has appeared only once in this $k$-augmenting path, then its tree in-degree increases by at most 1, and thus its tree in-degree is $\leq k-2 + 1 = k-1$.
		\item $v_l$ could never appear in the $k$-augmenting path for more than twice because all subtrees $\tree_{u_i}, i\in[l]$ are disjoint.
	\end{enumerate}
	
	Therefore, $\deg(v_l)$ stays smaller than $k$ after the adjustment. To summarise, $|N_k|$ decreases by 1, and no other $|N_i|, i>k$ has increased.
\end{proof}

\subsection{Main algorithm}
Assume $\epsilon \in (0, \frac{1}{4})$ is a constant. Define $c = 2\log^{0.4}n$. We assume $c\geq 4$, $c > 1/\epsilon$; this assumption is valid when $n$ is sufficiently large. For each vertex $w\in V$, define its potential $\phi(w) = c^{\deg(w)}$. Then define potential function to be the sum over all vertex potentials, i.e., $$\phi = \phi(\tree) = \sum_{w\in V}\phi(w) = \sum_{i=0}^\Delta c^i\cdot |N_i|$$

One problem with $k$-augmenting path is that when we adjust the tree $\tree$ as in lemma \ref{aug}, we could blow up the potential function $\phi$ significantly. Therefore, we should only focus on $k$-augmenting with some additional nice properties.

\begin{defn}
	A $k$-augmenting path specified by $u_1\rightsquigarrow v_1, u_2\rightsquigarrow v_2, \cdots\cdots, u_l\rightsquigarrow v_l$ is called {\rm potential-efficient} if the inequality holds: (remind that $\tree_{u_i}\cap S_{k-2}=\emptyset$ for every subtree $\tree_{u_i}$)
	$$\sum_{w\in \tree_{u_i}} c^{\deg(w)}\leq 0.9\cdot \frac{\epsilon}{(1+\epsilon)^i}\cdot c^{k-1}$$
\end{defn}

The lemma below lower-bounds the potential decrease when lemma \ref{aug} is applied on a potential-efficient $k$-augmenting path.
\begin{lemma}\label{adjust}
	When lemma \ref{aug} is applied on a potential-efficient $k$-augmenting path, the potential decrease is at least $0.05\cdot c^k$.
\end{lemma}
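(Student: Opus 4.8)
The plan is to bound the potential decrease by carefully accounting for three contributions: the guaranteed decrease from $\parent(u_1)$ leaving $N_k$, the increase from $v_l$ possibly gaining up to two children, and the increase from all intermediate vertices on the paths $u_i\rightsquigarrow v_i$. First I would invoke Lemma~\ref{aug}: $\parent(u_1)$ drops its in-degree from $k$ to $k-1$, contributing a clean potential decrease of $c^k - c^{k-1} = (1-1/c)c^k \geq \tfrac34 c^k$ since $c\geq 4$. This is the positive term that everything else must be subtracted from.

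Next I would bound the increase coming from the interior vertices of each path $u_i\rightsquigarrow v_i$. By property (v) of a $k$-augmenting path, every vertex on $u_i\rightsquigarrow v_i$ other than $v_i$ lies in $\tree_{u_i}$ and has tree in-degree $\leq k-3$; by Lemma~\ref{aug}(3) each such vertex gains at most one child. So the potential increase attributable to path $i$'s interior is at most $\sum_{w\in \tree_{u_i}}\big(c^{\deg(w)+1}-c^{\deg(w)}\big) = (c-1)\sum_{w\in\tree_{u_i}}c^{\deg(w)} \le c\sum_{w\in\tree_{u_i}}c^{\deg(w)}$. Here is where potential-efficiency enters: $\sum_{w\in\tree_{u_i}}c^{\deg(w)}\le 0.9\cdot\frac{\epsilon}{(1+\epsilon)^i}c^{k-1}$, so path $i$ contributes at most $0.9\,\frac{\epsilon}{(1+\epsilon)^i}c^{k}$ to the increase. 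Summing the geometric series over $i=1,\dots,l$ gives total interior increase at most $0.9\,\epsilon\, c^k\sum_{i\ge 1}(1+\epsilon)^{-i} = 0.9\,\epsilon\,c^k\cdot\frac{1}{\epsilon} = 0.9\,c^k$. (One subtlety: $v_l$ may coincide with some interior vertex when it appeared twice, but in that case Lemma~\ref{aug}(2) says $\deg(v_l)$ rises by at most $2$ total, which is at most $c^{\deg(v_l)+2}-c^{\deg(v_l)} \le c^2\cdot c^{\deg(v_l)} \le c^2\cdot c^{k-3} = c^{k-1}$; I would fold this into the bound. In the remaining cases $v_l\notin\bigcup\tree_{u_i}$ and $\deg(v_l)$ rises by at most one among the $v_i$, so its increase is at most $c^k - c^{k-1}\le c^{k-1}\cdot(c-1)$, absorbed similarly — actually simplest is to note $v_l$'s increase is always at most $c^{k-1}\cdot(c-1)< c^k$, but that is too weak; I'd instead observe that when $\deg(v_l)\le k-3$ the increase is $\le c^{k-1}$, and when $\deg(v_l)=k-2$ it becomes $k-1$, increase $c^{k-1}-c^{k-2}< c^{k-1}$, uniformly bounded by $c^{k-1}\le \tfrac14 c^k$ since $c\ge 4$.)

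Putting the pieces together: potential decrease $\ge \tfrac34 c^k - 0.9\,c^k - \tfrac14 c^k$, which is unfortunately negative, so the crude grouping above is too lossy and I would need to sharpen it. The fix is to not bound $(c-1)\le c$ but keep $(c-1)/c = 1-1/c$, and more importantly to note the geometric sum actually starts at $i=1$ giving $\sum_{i\ge1}(1+\epsilon)^{-i}=1/\epsilon$ but one can instead use that the $i=1$ subtree already has potential $\le 0.9\frac{\epsilon}{1+\epsilon}c^{k-1}$; re-examining, the intended constants must make $\tfrac34 - 0.9\cdot\frac{c-1}{c}\cdot\frac{1}{1+\epsilon}\cdot(\text{something}) - o(1)\ge 0.05$. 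I expect the clean accounting is: interior increase $\le 0.9\epsilon c^k \cdot\frac{1}{\epsilon}\cdot\frac{c-1}{c}= 0.9\frac{c-1}{c}c^k$, and with $\deg(\parent(u_1))$'s drop being exactly $\frac{c-1}{c}c^k$, the net from these two is $0.1\cdot\frac{c-1}{c}c^k \ge 0.1\cdot\frac34 c^k = 0.075\,c^k$; subtracting $v_l$'s contribution of at most $\frac{1}{c^2}c^k\le \tfrac{1}{16}c^k = 0.0625\,c^k$ is still too much, so the sharper bound $\deg(v_l)$-increase $\le c^{k-1}=\frac1c c^k\le\frac14 c^k$ must be replaced: in the "appeared twice" case $v_l$'s original degree is $\le k-3$ and increase $\le c^{k-1}$, i.e. $\frac1c c^k$, which for $c\ge 20$ (true since $c=2\log^{0.4}n$ and $n$ large) is $\le 0.05 c^k$, leaving net $\ge 0.075 - 0.05 = 0.025\,c^k$ — still short of $0.05$.

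The real resolution, which I would pursue carefully, is that $v_l$'s increase should be charged against the telescoping slack: when $v_l$ appeared inside some $\tree_{u_j}$, that subtree's potential was counted in the geometric sum, so $v_l$'s own $c^{\deg(v_l)}$ term was already included and the marginal increase $c^{\deg(v_l)+2}-c^{\deg(v_l)}$ is $\le (c^2-1)\cdot[\text{its share}]$, already subsumed. So the honest bound is: decrease $\ge \frac{c-1}{c}c^k - 0.9\cdot c\cdot\sum_i\frac{\epsilon}{(1+\epsilon)^i}c^{k-1} = \frac{c-1}{c}c^k - 0.9 c^k \ge (1-\frac1c - 0.9)c^k = (0.1 - \frac1c)c^k$, and since $c\ge 4$ this is at least $(0.1-0.25)c^k<0$ — so I must have $c$ large; with $c=2\log^{0.4}n\to\infty$, $(0.1-1/c)c^k\to 0.1\,c^k \ge 0.05\,c^k$ for $n$ large enough, which matches the paper's "$n$ sufficiently large" hypothesis. \textbf{The main obstacle} is therefore precisely this constant-chasing: getting the interior-path increase bounded by $0.9\,c^k$ (not $0.9\epsilon c^k/\epsilon \cdot c$, i.e. being careful that the $(c-1)$ factor from $c^{\deg+1}-c^{\deg}$ cancels the $c^{k-1}$ vs $c^k$ discrepancy exactly once, not producing an extra $c$), and correctly handling the degenerate case where $v_l$ reappears so that its extra $+2$ in degree is paid for by slack already in the geometric sum rather than double-counted. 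I would write the $v_l$ case analysis mirroring the three cases in the proof of Lemma~\ref{aug}.
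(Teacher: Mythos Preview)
Your approach is essentially the paper's: bound the interior increase by $(c-1)\sum_{w\in\tree_{u_i}}c^{\deg(w)}$, sum the geometric series to get $0.9(c-1)c^{k-1}<0.9\,c^k$, add a separate $O(c^{k-1})$ term for $v_l$, and subtract from the drop $c^k-c^{k-1}$ of $\phi(\parent(u_1))$. The decisive observation, which you reach only at the very end after several wrong turns, is the one the paper uses from the outset: the lemma is stated under the standing assumption that $n$ (hence $c=2\log^{0.4}n$) is large, so every $O(1/c)$ term can be made smaller than any fixed constant. With that, the arithmetic is immediate:
\[
(c^k-c^{k-1}) - 0.9\,c^k - (c^{k-1}-c^{k-2}) \;=\; \bigl(0.1 - \tfrac{2}{c}+\tfrac{1}{c^2}\bigr)c^k \;>\; 0.05\,c^k
\]
once $c$ is moderately large. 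Your detours trying to make things work with $c\ge 4$ alone were doomed, and recognising this earlier would have collapsed the argument to a few lines.

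One genuine slip to fix: your final ``honest bound'' drops $v_l$ entirely, and your attempt to absorb $v_l$'s increase into the ``telescoping slack'' is not valid in general. That charging only covers the case $v_l\in\tree_{u_j}$ for some $j$, and even then only the \emph{first} of its two possible increments (the $(c-1)$ factor bounds a $+1$ change, not a $+2$). When $v_l$ lies outside every $\tree_{u_i}$ it is not counted at all in your interior sum. The clean fix, which is exactly what the paper does, is to account for $v_l$ separately: in every case its new degree is at most $k-1$, so its potential rises by at most $c^{k-1}-c^{k-2}$, an $O(c^{k-1})$ term that is swallowed by the ``$n$ large'' hypothesis.
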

\begin{proof}
	Consider firstly those vertices whose potential have increased.	Since all vertices on $u_i\rightsquigarrow v_i$, which belong to $\tree_{u_i}$ except for $v_i$, increase their tree in-degrees by at most 1, the total potential increase caused by these vertices is at most
	$$(c-1)\cdot\sum_{v\in \tree_{u_i}} c^{\deg(v)}\leq (c-1)\cdot 0.9\cdot \frac{\epsilon}{(1+\epsilon)^i}\cdot c^{k-1}$$
	
	For all $v_i, i\in [l]$, $\phi(v_i)$ does not change expect $v_l$. By lemma \ref{adjust}, the potential increase of $\phi(v_l)$ is at most $c^{k-1} - c^{k-2}$, for sufficiently large $n$. Summing up, the total potential increase would be
	$$\leq c^{k-1} - c^{k-2} + \sum_{i=1}^l(c-1)\cdot 0.9\cdot \frac{\epsilon}{(1+\epsilon)^i}\cdot c^{k-1} < c^{k-1} - c^{k-2} + 0.9 c^k < 0.94 c^k$$
	for sufficiently large $n$.
	
	Now secondly let us consider those vertices whose potential have decreased. By lemma \ref{adjust}, as a $k$-degree vertex, $\parent(u_1)$ has lost a child which is $u_1$ from such a tree adjustment. Hence the decrease of $\phi(\parent(u_1))$ is equal to $c^{k} - c^{k-1}$.
	
	Overall, there the potential loss is $> c^{k} - c^{k-1} - 0.94\cdot c^{k} > 0.05 c^{k}$ when $n$ is large enough.
\end{proof}

Similar to the original algorithm in \cite{dmdst}, our algorithm operates iteratively, and each iteration consists of two major steps.
\begin{enumerate}[Step (1)]\setlength{\itemsep}{0pt}
	\item Find a proper $k\leq \Delta$ as well as a set $W$ of unrelated vertices as the starting vertices of potential-efficient $k$-augmenting paths.
	
	\item We search for longer and longer potential-efficient $k$-augmenting paths starting from vertices in $W$. Eventually if we successfully find a potential-efficient augmenting path, then we adjust tree $\tree$ accordingly which would greatly decrease the potential function $\phi$, and then move on to the next iteration; otherwise, we argue a lower bound on $\opt$ of $(1 - O(\epsilon))\cdot(\Delta - O(\frac{\log n}{\log\log n}))$ and terminate the algorithm.
\end{enumerate}
The whole procedure is described in Algorithm~\ref{improved}.

\begin{algorithm}
	\caption{Improved approximate DMDST algorithm}\label{improved}
	\While{$\Delta > 2\frac{\log n}{\log (c/2)}$}{
		let $k = \arg\max_d \{\left(\frac{c}{2}\right)^d \cdot |N_d|\}$\;
		flag $=$ false\;
		$V_0=N_k$\;
		\Repeat{$|\bigcup_{j=0}^i V_j|<(1+\epsilon)|\bigcup_{j=0}^{i-1} V_j|$} {
			$i=i+1$, $V_i=\emptyset$ \;
			Define $U_i$ to be the set of vertices $u$ whose parent is in $V_{i-1}$ satisfying $\tree_u\cap S_{k-2}=\emptyset$ and $\sum_{w\in \tree_u} c^{\deg(w)}\leq 0.9\cdot \frac{\epsilon}{(1+\epsilon)^i}\cdot c^{k-1}$ \;
			\For{\rm $u\in U_i$} {
				find the set of vertices $X$ outside $\tree_u$ that $u$ can reach by a path whose intermediate vertices are in $\tree_u$ \;
				\If {$X$ contains a vertex of degree $\leq k-2$} {
					flag $=$ true, and adjust $\tree$ according to Lemma~\ref{adjust}\;
					\textbf{break}\;
					}
				$V_i=V_i\cup (X\cap N_{k-1})$ \;
			}
			$V_i = V_i \setminus \bigcup_{j=0}^{i-1}V_j$\;
		}
		\If{\rm flag is false}{
			\Return $\tree$\;
		}
	}
	\Return $\tree$\;
\end{algorithm}

\subsubsection*{Analysis}

\begin{lemma}\label{bound-k}
In step 2, the chosen $k$ is at least $\Delta - \frac{\log n}{\log (c/2)}$.
\end{lemma}
\begin{proof}
For $d< \Delta - \frac{\log n}{\log (c/2)}$,
$$(c/2)^d|N_d| \leq n\cdot (c/2)^d = n\cdot (c/2)^{d - \Delta}\cdot (c/2)^\Delta < n\cdot (c/2)^{-\frac{\log n}{\log (c/2)}}\cdot (c/2)^\Delta = (c/2)^\Delta \leq (c/2)^\Delta |N_\Delta|$$
So those $d$ cannot be chosen.
\end{proof}

If algorithm \ref{improved} terminates on line-18, then $\Delta \leq 2\frac{\log n}{\log(c/2)} = \frac{5\log n}{\log\log n}$ by the time of termination, which already gives an approximation of $(1 + O(\epsilon))\cdot \opt + O(\frac{\log n}{\log\log n})$. For the rest of this section, we only consider terminations on line-17.

In the algorithm, the degree of vertices in $V_0$ is $k$, and the degree of vertices in $V_1,V_2,\cdots$ is $k-1$. 
We will ensure that the sets $\{V_i\}$ are disjoint, and then, since the subtrees rooted at $u\in U_i$ does not contain vertices of degree $\geq k-2$, all vertices in $U_i$ are unrelated and all sets $\{U_i\}$ are disjoint.
So when an iteration ends with a true ``flag'', we can find a $k$-augmenting path $u_1\rightsquigarrow v_1, u_2\rightsquigarrow v_2, \cdots\cdots, u_l\rightsquigarrow v_l$ with $u_i\in U_i$ and $v_i\in V_i$,
and by Lemma~\ref{adjust}, the potential decrease is at least 
$$0.05\cdot c^k\geq 0.05\cdot c^{\Delta-\frac{\log n}{\log (c/2)}}\geq 0.05\cdot\frac{1}{n}\phi(T)\cdot 2^{-\log n}\cdot (c/2)^{-\frac{\log n}{\log (c/2)}}=\frac{1}{20n^3}\phi(T)$$
Thus, we can conclude that 
\begin{lemma}
The total number of big iterations in Algorithm~\ref{improved} is bounded by a polynomial of $n$.
\end{lemma}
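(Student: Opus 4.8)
The plan is a standard potential argument resting on Lemma~\ref{adjust} (per-adjustment potential drop) and Lemma~\ref{bound-k} (lower bound on the chosen $k$). First I would record the two-sided bound on $\phi$ that holds at every point during the execution of Algorithm~\ref{improved}: every vertex contributes at least $c^0=1$ and at most $c^{\Delta}\le c^{\,n-1}$ to $\phi$, so
\[
n \;\le\; \phi(\tree) \;\le\; n\,c^{\,n}.
\]

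Next I would split the big iterations (iterations of the outer \textbf{while} loop) into two types. A big iteration that ends with \emph{flag} false makes the algorithm return on line~17, so this can happen at most once. A big iteration that ends with \emph{flag} true has, as explained just before the statement, produced a potential-efficient $k$-augmenting path $u_1\rightsquigarrow v_1,\dots,u_l\rightsquigarrow v_l$ with $u_i\in U_i$ and $v_i\in V_i$ — here one uses that the sets $\{V_i\}$, hence $\{U_i\}$, are disjoint and the $u_i$ unrelated — and then applies the adjustment of Lemma~\ref{aug} as quantified in Lemma~\ref{adjust}. By Lemma~\ref{adjust} this adjustment decreases $\phi$ by at least $0.05\,c^{k}$. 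Chaining this with Lemma~\ref{bound-k} ($k\ge \Delta-\frac{\log n}{\log(c/2)}$), the trivial bound $\phi\le n\,c^{\Delta}$, the identity $c=2\cdot(c/2)$, and the assumption $c\ge 4$ (which gives $2^{-\log n/\log(c/2)}\ge 2^{-\log n}=1/n$) yields $0.05\,c^{k}\ge \frac{1}{20n^{3}}\,\phi(\tree)$, exactly the estimate displayed above the statement. Consequently each \emph{flag}-true big iteration multiplies the current value of $\phi$ by a factor at most $1-\frac{1}{20n^{3}}$.

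Finally I would chain these multiplicative decreases: after $T$ \emph{flag}-true big iterations, $\phi\le n\,c^{\,n}\bigl(1-\frac{1}{20n^{3}}\bigr)^{T}$, and since $\phi\ge n$ always holds, we must have $\bigl(1-\frac{1}{20n^{3}}\bigr)^{T}\ge c^{-n}$; taking logarithms and using $-\ln(1-x)\ge x$ gives $T\le 20n^{3}\cdot n\ln c=O(n^{4}\log\log n)$, recalling $c=2\log^{0.4}n$. Adding the at most one terminating \emph{flag}-false iteration, the total number of big iterations is $O(n^{4}\log\log n)$, that is, polynomial in $n$. I do not anticipate a real obstacle here; the only points that need care are checking that the relative potential drop $\Omega(1/n^{3})$ holds \emph{uniformly} across all iterations — which is precisely where the $\arg\max$ choice of $k$ and Lemma~\ref{bound-k} are used — and that the lower bound $\phi\ge n$ is legitimately invoked to cut off the geometric decay.
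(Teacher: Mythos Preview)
Your proposal is correct and follows essentially the same potential argument as the paper: both use Lemma~\ref{adjust} together with Lemma~\ref{bound-k} to get a multiplicative potential drop of $1-\Omega(1/n^{3})$ per flag-true iteration, and then divide the logarithm of the initial-to-final potential ratio by this rate to obtain $O(n^{4}\log c)=O(n^{4}\log\log n)$ iterations. Your two-sided bounds on $\phi$ differ cosmetically from the paper's ($n\le\phi\le n\,c^{n}$ versus $n\cdot c\le\phi\le c^{n}$), and you are slightly more explicit about the single possible flag-false iteration, but the argument is the same.
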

\begin{proof}
The big iteration can only continue to the next one when we have found a $k$-augmenting path, and we have shown the potential will be decreased by a factor of $\leq (1-\frac{1}{20n^3})$.
Since $n\cdot c\leq \phi(T)\leq c^n$, the total number of this improvement is bounded by 
$$\frac{\log c^{n-1}}{-\log (1-\frac{1}{20n^3})}=O(n^4\log c)$$
\end{proof}

Next, we analyze the case when the algorithm ends with $|\bigcup_{j=0}^i V_j|<(1+\epsilon)|\bigcup_{j=0}^{i-1} V_j|$. In every iteration, the set $U_i$ chosen in step 7 which are the set of vertices $u$ satisfying:
\begin{enumerate}[(a)]
	\item The parent of $u$ is in $V_{i-1}$, where $V_0=N_k$.
	\item Subtree $T_u$ contains no vertices of degree $\geq k-2$.
	\item $\sum_{w\in \tree_u} c^{\deg(w)}\leq 0.9\cdot \frac{\epsilon}{(1+\epsilon)^i}\cdot c^{k-1}$.
\end{enumerate}

First, if the inner loop does not end, we have $|\bigcup_{j=0}^i V_j|\geq (1+\epsilon)|\bigcup_{j=0}^{i-1} V_j|$, which means $|\bigcup_{j=0}^i V_j| \geq (1+\epsilon)^i |N_k|$,
also $|V_i|\geq \epsilon|\bigcup_{j=0}^{i-1} V_j|$, so we have $|V_i|\geq \epsilon  (1+\epsilon)^{i-1} |N_k|$. 

We can lower bound the total size of $U_1,U_2,\cdots,U_i$ by:
\begin{lemma}\label{bound-Ui}
The size of $U_i$ after iteration $i$ is at least $(k-2-\frac{c^2}{\epsilon})|V_{i-1}|$.
\end{lemma}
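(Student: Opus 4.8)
The plan is to count how many children of vertices in $V_{i-1}$ fail to enter $U_i$. Since the algorithm only ever puts vertices of $N_{k-1}$ into $V_i$ for $i\geq 1$ and sets $V_0=N_k$, every $v\in V_{i-1}$ has tree in-degree $\geq k-1$, so the set $C$ of all children of vertices in $V_{i-1}$ satisfies $|C|\geq (k-1)|V_{i-1}|$; and $U_i$ is precisely the set of $u\in C$ satisfying both (A) $\tree_u\cap S_{k-2}=\emptyset$ and (B) $\sum_{w\in\tree_u}c^{\deg(w)}\leq 0.9\cdot\tfrac{\epsilon}{(1+\epsilon)^i}\cdot c^{k-1}$. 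Thus it suffices to prove that at most $\bigl(1+\tfrac{c^2}{\epsilon}\bigr)|V_{i-1}|$ vertices of $C$ violate (A) or (B), after which subtracting from $|C|$ finishes the lemma.

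First I would establish the structural fact that the children $u\in C$ satisfying (A) have pairwise disjoint subtrees: if $\tree_u$ and $\tree_{u'}$ shared a vertex then one subtree would contain the other, say $\tree_{u'}\subseteq\tree_u$, and then $\parent(u')\in V_{i-1}\subseteq S_{k-2}$ would be a vertex of $\tree_u$ (possibly $u$ itself), contradicting (A) for $u$. This fact drives both counts. For the (A)-violators, I would injectively map each such $u$ to the shallowest vertex $x_u$ of the nonempty set $\tree_u\cap S_{k-2}$ (ties broken by a fixed rule); a coincidence $x_u=x_{u'}$ with $\tree_{u'}\subsetneq\tree_u$ would place $\parent(u')\in S_{k-2}$ strictly above $x_u$ within $\tree_u$ (as $x_u=x_{u'}$ is a descendant of $u'$), contradicting the choice of $x_u$. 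Hence the number of (A)-violators is at most $|S_{k-2}|$; and the maximality of $(c/2)^k|N_k|$ gives $|N_d|\leq (2/c)^{d-k}|N_k|$, so summing the tail (using $c\geq 4$) yields $|S_{k-2}|\leq \bigl((c/2)^2+(c/2)+2\bigr)|N_k|$.

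For the vertices of $C$ that satisfy (A) but violate (B), I would use disjointness again: their subtrees are disjoint and, by (A), contain only vertices of degree $\leq k-3$, so the sum of $\sum_{w\in\tree_u}c^{\deg(w)}$ over all of them is at most $\sum_{d=0}^{k-3}c^d|N_d|\leq\tfrac14 c^k|N_k|$ (again by $|N_d|\leq (2/c)^{d-k}|N_k|$ and a geometric sum); since each (B)-violator here contributes more than $0.9\cdot\tfrac{\epsilon}{(1+\epsilon)^i}\cdot c^{k-1}$, there are fewer than $\tfrac{c(1+\epsilon)^i}{3.6\,\epsilon}|N_k|$ of them. Adding the two counts and plugging in the inner-loop growth invariant $|V_{i-1}|\geq\epsilon(1+\epsilon)^{i-2}|N_k|$ (trivial for $i=1$), which gives $|N_k|\leq|V_{i-1}|/\epsilon$ and $(1+\epsilon)^i|N_k|\leq(1+\epsilon)^2|V_{i-1}|/\epsilon$, bounds the number of bad children by $\Bigl(\tfrac{(c/2)^2+c/2+2}{\epsilon}+\tfrac{c(1+\epsilon)^2}{3.6\,\epsilon^2}\Bigr)|V_{i-1}|$, and the standing assumptions $c\geq 4$, $c>1/\epsilon$, $\epsilon<\tfrac14$ make this at most $\bigl(1+\tfrac{c^2}{\epsilon}\bigr)|V_{i-1}|$ — the arithmetic boils down to $\tfrac34 c^2\geq\tfrac{c}{2}+2+\tfrac{c(1+\epsilon)^2}{3.6\,\epsilon}$, valid for $c\geq 4$ since $c>1/\epsilon$.

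I expect the main obstacle to be pinning down the tree-surgery arguments in the second paragraph precisely: verifying that any shared vertex of two condition-(A) subtrees, or any collision of two witnesses $x_u$, genuinely forces a vertex of degree $\geq k-2$ into a subtree that is supposed to avoid $S_{k-2}$, and correctly treating the degenerate case where that vertex is the root of the subtree (equivalently a vertex of $V_{i-1}$). Once the disjointness and injectivity are secure, everything else is routine bookkeeping with the maximality of $(c/2)^k|N_k|$ and the growth invariant, with all constants comfortably absorbed by $c\geq 4$ and $c>1/\epsilon$.
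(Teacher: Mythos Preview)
Your proposal is correct and reaches the same bound as the paper, but the starting point is different. The paper first invokes (an easy extension of) Lemma~\ref{unrelated} to extract a set $W$ of \emph{unrelated} children of $V_{i-1}$ with $|W|\geq (k-2)|V_{i-1}|+1$; because the subtrees $\{\tree_w:w\in W\}$ are already pairwise disjoint, the (b)- and (c)-violators are bounded directly by $|S_{k-2}|$ and by a potential-mass pigeonhole, respectively, with no further structural work. You instead take the full set $C$ of children (so $|C|\geq (k-1)|V_{i-1}|$) and replace unrelatedness by two ad hoc arguments: an injection $u\mapsto x_u$ from (A)-violators into $S_{k-2}$, and a direct verification that (A)-satisfiers have disjoint subtrees (in both cases using that $\parent(u')\in V_{i-1}\subseteq S_{k-2}$ lands inside the larger subtree). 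After that, your bookkeeping with the maximality of $(c/2)^k|N_k|$ and the growth invariant $|V_{i-1}|\geq\epsilon(1+\epsilon)^{i-2}|N_k|$ is essentially identical to the paper's, and your constants are even slightly tighter. The trade-off is that the paper's route is shorter once Lemma~\ref{unrelated} is on the shelf, while yours is self-contained and avoids appealing to (the needed generalization of) that lemma.
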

\begin{proof}
The degree of vertices in $V_{i-1}$ is $k$ or $k-1$, then by lemma \ref{unrelated}, we can find a set $W$ of unrelated vertices satisfying (a), such that $|W|\geq (k-2)|V_{i-1}| + 1$. 
Then the next thing we do is to remove from $W$ all vertices that fail to meet requirements (b)(c) to obtain a lower bound of the size of all $U_i$, since vertices in $U_i$ are children of $V_{i-1}$.
Thus, we need to upper-bound the total number of vertices in $W$ that violate either (b) or (c).
	\begin{itemize}\setlength{\itemsep}{0pt}
		\item Upper bound on violations of (b).
		
		By maximality of $(c/2)^k|N_k|$, for all degree $d$, we have $|N_d|\leq (c/2)^{k-d}|N_k|$. So 
		$$\sum_{d=k-2}^{\Delta} |N_d| \leq |N_k|\cdot \sum_{i=-2}^{\infty}(\frac{c}{2})^{-i} =\frac{c^3}{4(c-2)}|N_k|$$
		Since $c\geq 4$, $c/(c-2)\leq 2$, the number of subtrees violates (b) is bounded by $\frac{c^2}{2}|N_k|$.
		
		\item Upper bound on violations of (c).
		First, we can bound the total potential of vertices of degree less than $k-2$. By maximality of $(c/2)^k|N_k|$, $c^d|N_d|\leq 2^{d-k}c^k|N_k|$. Then
		$$\sum_{d=0}^{k-3} c^d|N_d| \leq c^k|N_k|\cdot \sum_{d=0}^{k-3} 2^{d-k}<\frac{c^k}{4}|N_k|$$
		Note that the potentials of the subtrees violating (c) must be larger than $0.9\cdot \frac{\epsilon}{(1+\epsilon)^i}\cdot c^{k-1}$. Because vertices in $W$ are unrelated,
		the number of subtrees rooted at $W$ violating (c) is bounded by
		$$\frac{\frac{c^k}{4}\cdot|N_k|}{0.9\cdot \frac{\epsilon}{(1+\epsilon)^i}\cdot c^{k-1}} \leq \frac{(1+\epsilon)^i c}{3.6\epsilon} |N_k| \leq \frac{(1+\epsilon)^i c^2}{3.6}|N_k|$$
		where the last inequality follows from $\frac{1}{c} < \epsilon <\frac{1}{4}$.
	\end{itemize}

	Combining both aspects, as $|V_{i-1}|\geq \epsilon(1+\epsilon)^{i-2}|N_k|$, we immediately derive a lower bound on the size of $U_i$, i.e. 
	$$\begin{aligned}
	&|U_i| \geq |W| - (\frac{c^2}{2} + \frac{(1+\epsilon)^i c^2}{3.6})|N_k|\\
	&> (k-2)|V_{i-1}| - (\frac{c^2}{2} + \frac{(1+\epsilon)^i c^2}{3.6})\cdot \frac{1}{\epsilon (1+\epsilon)^{i-2}}|V_{i-1}|\\
	&> (k-2)|V_{i-1}| - (\frac{c^2}{2\epsilon(1+\epsilon)^{i-1}} + \frac{(1+\epsilon)^2c^2}{3.6\epsilon})|V_{i-1}|\\
	&> (k-2)|V_{i-1}| - \frac{c^2}{\epsilon}|V_{i-1}| = (k-2 - \frac{c^2}{\epsilon})|V_{i-1}|
	\end{aligned}$$
	which finishes the proof.

\end{proof}

Since the size of union of $\{V_i\}$ will increase by a factor of at least $(1+\epsilon)$ in each iteration, the total number of iterations of inner loop is bounded by $O(\epsilon^{-1}\log n)$.
After reaching $|\bigcup_{j=0}^i V_j|< (1+\epsilon)|\bigcup_{j=0}^{i-1} V_j|$, from the same argument of Algorithm~\ref{poly}, we can check:
\begin{itemize}
	\item Since every $X$ in step 10 does not contain vertices of degree $\leq k-2$, the first vertex $u\in U_i$ can reach outside $\tree_u$ is of degree $\geq k-1$, which are included in $S_{k+1}$ and 
	$\bigcup_{j=0}^i V_j$.
	\item Let $u,v\in \bigcup_{j=1}^i U_j$, since $u,v$ are unrelated, the path from $u$ and $v$ to $s$ cannot intersect before they go out of $T_u$ and $T_v$, that is, before they reach $S_{k+1}$ or 
	$\bigcup_{j=0}^i V_j$.
\end{itemize}
So by lemma \ref{witness}, from $V_0=N_k$, the optimal degree $\opt$ is bounded by:
$$\begin{aligned}
&\opt \geq \frac{|\bigcup_{j=1}^i U_j|}{|\bigcup_{j=0}^i V_j|+|S_{k+1}|} = \frac{\sum_{j=1}^i |U_j|}{\sum_{j=0}^i |V_j|+|S_{k+1}|}	&	(\text{disjointness of }U_j, V_j)\\
&\geq \frac{(k-2 - \frac{c^2}{\epsilon})\sum_{j=0}^{i-1}|V_j|}{\sum_{j=0}^i |V_j|+|S_{k+1}|}	&	(\text{lemma \ref{bound-Ui}})
\end{aligned}$$

We can bound $|S_{k+1}|=\sum_{d=k+1}^{\Delta} |N_d| \leq |N_k|\cdot \sum_{i=1}^{\infty}(\frac{c}{2})^{-i} =\frac{2}{c-2}|N_k|$. Since $1/c<\epsilon<1/4$, 
$|S_{k+1}|< \frac{2\epsilon}{1-2\epsilon} |N_k| <4\epsilon |\bigcup_{j=0}^{i-1} V_j|$.
Since $k>\Delta - \frac{\log n}{\log (c/2)}> \frac{\log n}{\log (c/2)}$, we assume $k>2c^2/\epsilon^2>2/\epsilon + c^2/\epsilon^2$, thus

$$\opt \geq \frac{(k-2-\frac{c^2}{\epsilon})|\sum_{j=0}^{i-1} |V_j|}{(1+\epsilon)\sum_{j=0}^{i-1} |V_j|+4\epsilon \sum_{j=0}^{i-1} |V_j|} > \frac{1-\epsilon}{1+5\epsilon}k>(1-6\epsilon)(\Delta - \frac{\log n}{\log (c/2)})$$

Finally, as $\epsilon$ is a fixed constant, and $c=2\log^{0.4} n$, then by lemma \ref{bound-k}, the assumption $k > \Delta - \frac{\log n}{\log(c/2)} > \frac{\log n}{\log(c/2)} > 2c^2/\epsilon^2$ holds when $n$ is sufficiently large. So finally we can have a directed spanning tree with in-degree $\Delta>(1+O(\epsilon))\opt + O(\frac{\log n}{\log\log n})$.

\vspace{5mm}
\bibliographystyle{plain}
\bibliography{dmdst}

\end{document}